\newtheorem{thm}{Theorem}[section]
\newtheorem{prop}[thm]{Proposition}
\newtheorem{cor}[thm]{Corollary}
\newenvironment {proof} {\noindent {\em Proof:}}{\hspace*{\fill}$\Box$\par\vspace{4mm}}
\def\mc#1{\mathcal #1}
\def\mb#1{\mathbb #1}
\def\E{\mathop{\rm \raisebox{0.1pt} E}}
\def\corref#1{Corollary \ref{#1}}
\def\tblref#1{Table \ref{#1}}
\def\figref#1{Figure \ref{#1}}
\def\secref#1{Section \ref{#1}}
\def\subsecref#1{Subsection \ref{#1}}
\title{Buffered environmental contours}
\author{K. R. Dahl \& A. B. Huseby} 
\date{}
\begin{document}

\maketitle

\abstract{The main idea of this paper is to use the notion of buffered failure probability from probabilistic structural design, first introduced by \cite{RockafellarRoyset}, to introduce buffered environmental contours. Classical environmental contours are used in structural design in order to obtain upper bounds on the failure probabilities of a large class of designs. The purpose of buffered failure probabilities is the same. However, in constrast to classical environmental contours, this new concept does not just take into account failure vs. functioning, but also to which extent the system is failing. For example, this is relevant when considering the risk of flooding: We are not just interested in knowing whether a river has flooded. The damages caused by the flooding greatly depends on how much the water has risen above the standard level.}

\section{Introduction}
\label{sec: intro}
Environmental contours are widely used as a basis for e.g., ship design. Such contours allow the designer to verify that a given mechanical structure is safe, i.e, that the failure probability is below a certain value. A realistic model of the environmental loads and the resulting response is crucial for structural reliability analysis of mechanical constructions exposed to environmental forces. See \cite{WitUCBH93} and \cite{HW:ENvContLin09}. For applications of environmental contours in marine structural design, see e.g., \cite{bhoe:CombContHsT10}, \cite{folnpq:RelAnFPSOWestAfrica2013}, \cite{Jonathan:OMAE2011-49886}, \cite{Moan:DevAccLiStCri09} and \cite{DitlevsenStocModJointWaveWind02}.

The traditional approach to environmental contours is based on the well-known \emph{Rosenblatt transformation} introduced in \cite{Rosenblatt52}. This transformation maps the the environmental variables into independent standard normal variables. Using the transformed environmental variables a contour with the desired properties can easily be constructed by identifying a sphere centered in the origin and with a suitable radius. More specifically, the sphere can be chosen so that any non-overlapping convex failure region has a probability less than or equal to a desired exceedence probability. The corresponding environmental contour in the original space can then be found by transforming the sphere back into the original space.

Alternatively, an environmental contour can be constructed directly in the original space using Monte Carlo simulation. See \cite{HusebyVN-EnvCont-OE2013}, \cite{HusebyVN-EnvCont-SS2015} and \cite{HusebyVN-EnvCont-ESREL2014}. Contours constructed using this approach will always be convex sets. This yields a more straightforward interpretation of the contours. Another advantage of this approach is a more flexible framework for establishing environmental contours, which for example simplifies the inclusion of effects such as future projections of the wave climate related to climatic change. See \cite{VanemB-G:AOR2011}.

In the present paper we introduce a new concept called buffered environmental contours. This concept is based on the notion of buffered failure probability from probabilistic structural design, first introduced by \cite{RockafellarRoyset}. Contrary to classical environmental contours, this new concept does not just take into account failure vs. functioning, but also to which extent the system is failing. For example, this is relevant when considering the risk of flooding: We are not just interested in knowing whether a river has flooded. The damages caused by the flooding greatly depends on how much the water has risen above the standard level.

The structure of this paper is as follows: In \secref{sec: buffered}, we recall the classic definition of failure probability in probabilistic structural design and compare this to the concept of buffered failure probability, as defined in \cite{RockafellarRoyset}. Furthermore, we recall some of the arguments favoring the buffered failure probability over the regular failure probability. Then, in \secref{sec: environmental_contours}, we recall the concept of environmental contours and how such contours are used in structural design in order to find upper bounds on the failure probabilities of a large class of designs. In \secref{sec: buffered_environmental_contours}, we introduce the new concept of buffered environmental contours, and argue that these contours are better suited than the classical ones in cases where the level of malfunctioning is important. Finally, in \secref{sec: numerical_example}, we apply the proposed contours to a real life example, and compare the contours to the classical environmental contours.

\section{Structural design and the buffered failure probabity}
\label{sec: buffered}
In probabilistic structural design, it is common to define a \emph{performance function}\footnote{The performance function is sometimes called the \emph{limit-state function}.} $g(\bm{x}, \bm{V})$ depending on some design variables $\bm{x} = (x_1, x_2, \ldots, x_m)'$ and some environmental quantities\footnote{Environmental quantities should here be understood in a broad sense. E.g., for marine structures such quantities typically includes wave height and period. For other types of structures, one may consider e.g., material quality, effects of erosion or corrosion as environmental quantities.} $\bm{V} = (V_1, V_2, \ldots, V_n)' \in \mc{V}$, where $\mc{V} \subseteq \mb{R}^n$. The design variables can be influenced by the designer of the structure, and may respresent material type or layout. The quantities are usually random, and cannot be directly impacted by the designer. Hence, they may describe environmental conditions, material quality or loads. To emphasize the randomness of the quantities, we denote them by captial letters. In contrast, the design variables are controlled by the designer and hence denoted by small letters.

For a given design $\bm{x}$, $g(\bm{x}, \bm{V})$ represents the performance of the structure, and is called the \emph{state of the structure}. A given mechanical structure can withstand environmental stress up to a certain level. The \emph{failure region} of the structure is the set of states of the environmental variables that imply that the structure fails. The performance function is defined such that if $g(\bm{x}, \bm{V}) > 0$, the structure is \emph{failed}, while if $g(\bm{x}, \bm{V}) \leq 0$, the structure is \emph{functioning}. Moreover, for a given $\bm{x}$ the set $\mc{F}(\bm{x}) = \{\bm{v} \in \mc{V} : g(\bm{x}, \bm{v})> 0\}$ is called the \emph{failure region} of the structure\footnote{In some papers, such as \cite{HusebyVN-EnvCont-OE2013}, the failed states are defined as the states such that $g(\bm{x}, \bm{V}) < 0$. This is just a matter of choice of notation.}.

\subsection{The failure probability, reliability and approximation methods}
\label{subsec: failure_prob}
The failure probability, denoted by $p_f(\bm{x})$, of the structure is the probability that the structure is failed. That is, $p_f(\bm{x}) = P(g(\bm{x}, \bm{V}) > 0)$. If $f_{\bm{V}}(\bm{v})$ is the joint probability density function for the random vector $\bm{V}$, the failure probability is given by:
\begin{equation}
\label{eq: failure_prob}
p_f(\bm{x}) = \int_{\mc{F}(\bm{x})}  f_{\bm{V}}(\bm{v}) d\bm{v}.
\end{equation}


For a given $\bm{x}$ the \emph{reliability}, $R(\bm{x})$, of the system is defined as the probability that the system is functioning, i.e.:
\begin{equation}
\label{eq: reliability}
R(\bm{x}) = 1 - p_f(\bm{x})
\end{equation}

A classic problem is to compute the reliability of the system. In order to do so, we need to compute the integral \eqref{eq: failure_prob}. In many cases it is difficult to obtain and analytical solution to this. To overcome this issue various approximation methods have been proposed. Two traditional methods for doing this are the \emph{first-order reliability method} (FORM) and the \emph{second-order reliability method} (SORM). The basic idea of the first-order reliability method is to approximate the failure boundary at a spesific point by a first order Taylor expansion. The idea behind SORM is similar, but using a second order Taylor expansion instead. In both cases, the approximated failure probability can be used to optimize the structural design, i.e. determine a feasible design which has an acceptable failure probability.

\subsection{Return periods}
\label{sec: returnPeriods}
As is common in structural design models, we view $\bm{V}$ as representing the average value of the relevant environmental variables in a suitable time interval of length $L$. Based on this and knowledge of the performance function $g$ it is possible to compute the so-called \emph{return period}. This is done as follows:

We consider the environmental exposure of the given design from time $t \geq 0$. The time axis is divided into intervals of some specified length $L$, and we let $\bm{V}_i$ denote the average environmental quantity in the $i$th period, $i = 1, 2, \ldots$. It is common to assume that $\bm{V}_1, \bm{V}_2, \ldots$ are independent and identically distributed. This is a fairly strict assumption, but as it is so frequently used in structural design, we assume this as well. We then let $T := \min \{ i : g(\bm{x}, \bm{V}_i) >0\}$. By the assumptions it follows that $T$ is geometrically distributed with probability $p_f = P(g(\bm{x}, \bm{V}) >0)$. The \emph{return period} is defined as $E[T] = 1/p_f$. Thus, the return period can be interpreted as a property of the distribution of $g(\bm{x}, \bm{V})$. Hence, it suffices to analyze this distribution, which is what we will focus on in this paper.

\subsection{The buffered failure probability}
The approximations made by FORM and SORM can sometimes be too crude and ignore serious risks. Therefore, we will consider the buffered failure probability, introduced by \cite{RockafellarRoyset} as an alternative to the failure probability. This concept relates closely to the conditional value-at-risk (also called expected shortfall, average value-at-risk or expected tail loss), which is a notion frequently used in mathematical finance and financial engineering, see \cite{Pflug}, \cite{Rockafellar} as well as \cite{RockafellarUryasev}. 

Recall that for any level of probability $\alpha$, the $\alpha$-quantile of the distribution of a random variable is the value of the inverse of its cumulative distribution function at $\alpha$. For the random variable $g(\bm{x}, \bm{V})$, we let $q_{\alpha}(\bm{x})$ denote its $\alpha$-quantile. Similarly, for any probability level $\alpha$, the $\alpha$-superquantile of $g(\bm{x},\bm{V})$, $\bar{q}_{\alpha}(\bm{x})$, is defined as:
\begin{equation}
\label{eq: superquantile}
\bar{q}_{\alpha}(\bm{x}) =  E[g(\bm{x}, \bm{V}) | g(\bm{x}, \bm{V}) > q_{\alpha}(\bm{x})].
\end{equation}
That is, the $\alpha$-superquantile is the conditional expectation of $g(\bm{x}, \bm{V})$ when we know that its value is greater than or equal the $\alpha$-quantile. \cite{RockafellarRoyset} then define the buffered failure probability, $\bar{p}_f(\bm{x})$, as follows:
\begin{equation}
\label{eq: buffered_failure_prob}
\bar{p}_f(\bm{x}) = 1-\alpha,
\end{equation}
where $\alpha$ is chosen so that $\bar{q}_{\alpha}(\bm{x}) = 0$. Note that from the previous definitions we have: 
\begin{equation}
\label{eq: alt_buffered_failure_prob}
\bar{p}_f(\bm{x})= P(g(\bm{x}, \bm{V}) > q_{\alpha}(\bm{x})) = 1-F(q_{\alpha}(\bm{x}))
\end{equation}
where $F$ denotes the cumulative distribution function of $g(\bm{x},\bm{V})$. 

\begin{figure}[h]
\begin{center}
\includegraphics[width=0.5\textwidth]{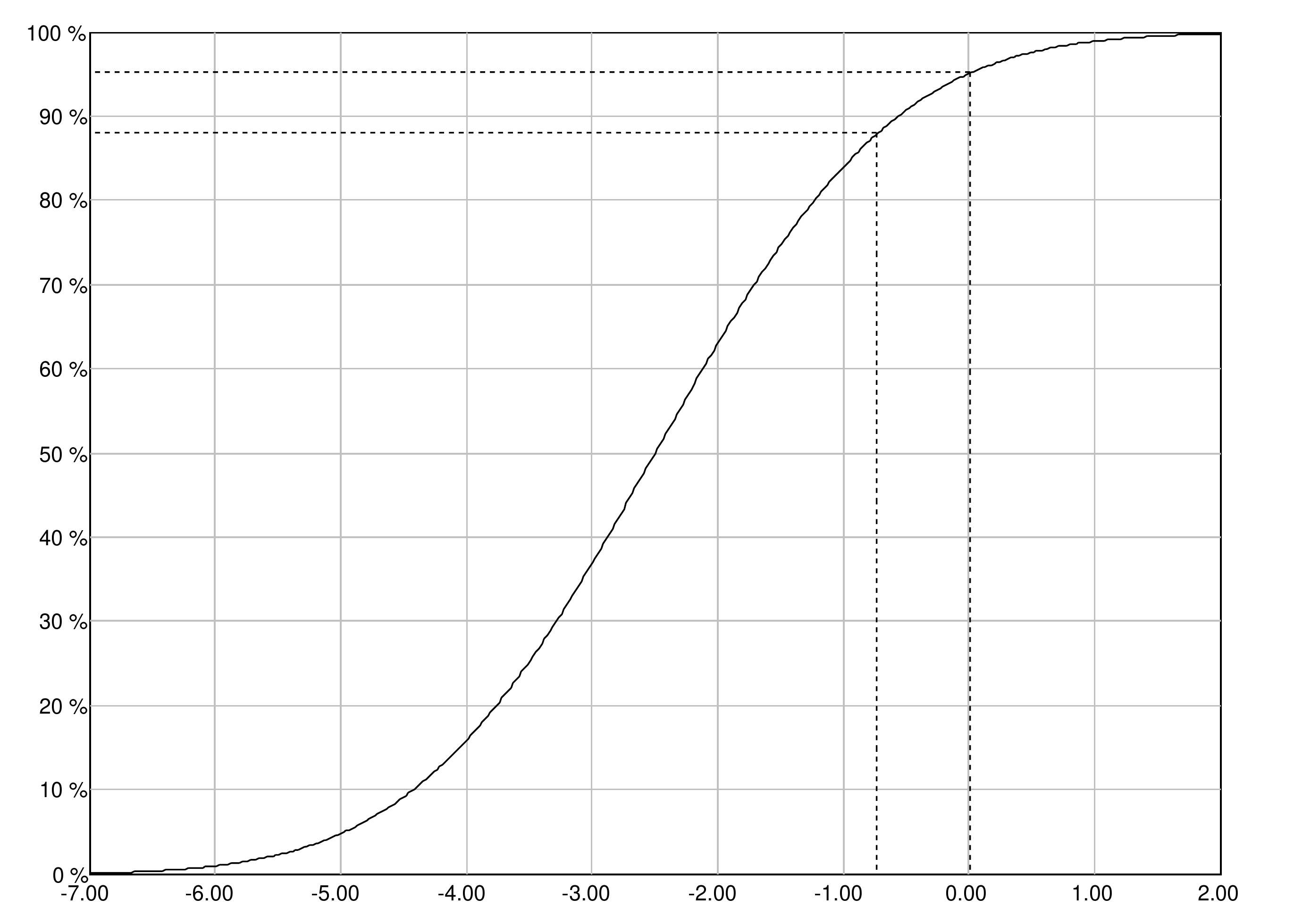}
\end{center}
\caption{Buffered failure probability calculation where: $p_f(\bm{x}) = 0.048$, $q_{\alpha}(\bm{x}) = -0.743$, $\alpha = F(q_{\alpha}(\bm{x})) = 0.879$, and $\bar{p}_f(\bm{x}) = 1 - \alpha = 0.121$.}
\label{fig: bufProb_gauss}
\end{figure}

In order to show how to calculate the buffered failure probability $\bar{p}_f(\bm{x})$, we consider the plot shown in \figref{fig: bufProb_gauss}. The curve in the plot represents the cumulative distribution function of the performance function, $g(\bm{x}, \bm{V})$. As an example we have chosen a Gaussian distribution with mean value $-2.5$ and standard deviation $1.5$. For this distribution we have $F(0) = 0.952$, as can also be seen in the figure by considering the right-most vertical dashed line starting at $0$ on the x-axis, and the corresponding upper horizontal dashed line starting at $0.952$. Hence, we get that $p_f(\bm{x}) = 1 - F(0) = 0.048$. In the figure $p_f(\bm{x})$ is the distance between $100\%$-line and the upper horizontal dashed line. 

Using e.g., Monte Carlo simulation it is easy to estimate $q_{\alpha}(\bm{x})$, and we find that $q_{\alpha}(\bm{x}) = -0.743$. In the figure $q_{\alpha}(\bm{x})$ is represented by the leftmost vertical dashed line. By following this line until it crosses the cumulative curve, we find that $\alpha = F(q_{\alpha}(\bm{x})) = 0.879$. Finally, the buffered failure probability is found to be $\bar{p}_f(\bm{x}) = 1 - \alpha = 0.121$. In the figure $\bar{p}_f(\bm{x})$ is the distance between $100\%$-line and the lower horizontal dashed line. 

It is easy to see that we always have $q_{\alpha}(\bm{x}) \leq 0$, and thus, it follows that $\alpha = F(q_{\alpha}(\bm{x})) \leq F(0)$. This implies that:
$$
\bar{p}_f(\bm{x}) = 1 - \alpha \geq 1 - F(0) = p_f(\bm{x}).
$$
Hence, it follows that the buffered failure probability is more conservative than the failure probability. See \cite{RockafellarRoyset} for a detailed discussion of this. 

\medskip

\cite{RockafellarRoyset} present several advantages of using the buffered failure probability instead of the regular failure probability. The following are some of the key arguments:

\begin{itemize}
\item{In general, the failure probability $p_f(\bm{x})$ cannot be computed analytically, and the techniques commonly used to approximate it, such as FORM or Monte Carlo methods, can sometimes ignore serious risks. This makes it problematic to apply standard non-linear optimization algorithms in connection to structure design. In contrast, non-linear optimization algorithms are directly applicable when using the buffered failure probability instead.}
\item{The buffered failure probability contains more information about the tail behaviour of the distribution of $g(\bm{x},\bm{V})$ than the failure probability.}
\item{The buffered failure probability can lead to more computational efficiency in design optimization when the performance function $g(\bm{x},\bm{V})$ is expensive to evaluate.}
\end{itemize}

The \emph{buffered reliability}, $\bar{R}(\bm{x})$, of the structure is defined as $\bar{R}(\bm{x}) = 1-\bar{p}_f(\bm{x})$. Since $p_f(\bm{x}) \leq \bar{p}_f(\bm{x})$, it follows that $R(\bm{x}) \geq \bar{R}(\bm{x})$. That is, the reliability of the system is greater than or equal to the buffered reliability. Again, this essentially says that the buffered reliability is more conservative than the reliability.

\section{Environmental contours}
\label{sec: environmental_contours}
Environmental contours are typically used during the early design phases where the exact shape of the failure region is typically \emph{unknown}. At this stage it it may not be possible to express a precise functional relationship between a set of design variables $\bm{x}$ and the performance of the structure. Instead we skip $\bm{x}$ in the notation and let the design options be embedded in the performance function $g(\bm{V})$ itself. In particular we denote the failure region simply by $\mc{F}$, while the corresponding failure probability, $P(\bm{V} \in \mc{F})$, is denoted by $p_f(\mc{F})$. 

Although $\mc{F}$ is unknown, it may still be possible to argue that $\mc{F}$ belongs to some known family, $\mc{E}$, of failure regions. As in the previous sections we consider cases where the environmental conditions can be described by a stochastic vector $\bm{V} \in \mb{R}^n$ with a known distribution. An important part of the probabilistic design process is then to make sure that $P(\bm{V} \in \mc{F})$ is acceptable for all $\mc{F} \in \mc{E}$.

In order to avoid failure regions with unacceptable probabilities, it is necessary to put some restrictions on the family $\mc{E}$. This is done by introducing a set $\mc{B} \subseteq \mb{R}^n$ chosen so that for any relevant failure region $\mc{F}$ which do not overlap with $\mc{B}$, the failure probability $P(\bm{V} \in \mc{F})$ is \emph{small}. The family $\mc{E}$ is chosen relative to $\mc{B}$ so that $\mc{F} \cap \mc{B} \subseteq \partial\mc{B}$ for all $\mc{F} \in \mc{E}$, where $\partial\mc{B}$ denotes the boundary of $\mc{B}$. This boundary is then referred to as an \emph{environmental contour}. See \figref{fig:env_contour}.

\begin{figure}[h!]
\begin{center}
\includegraphics[width=0.4\textwidth]{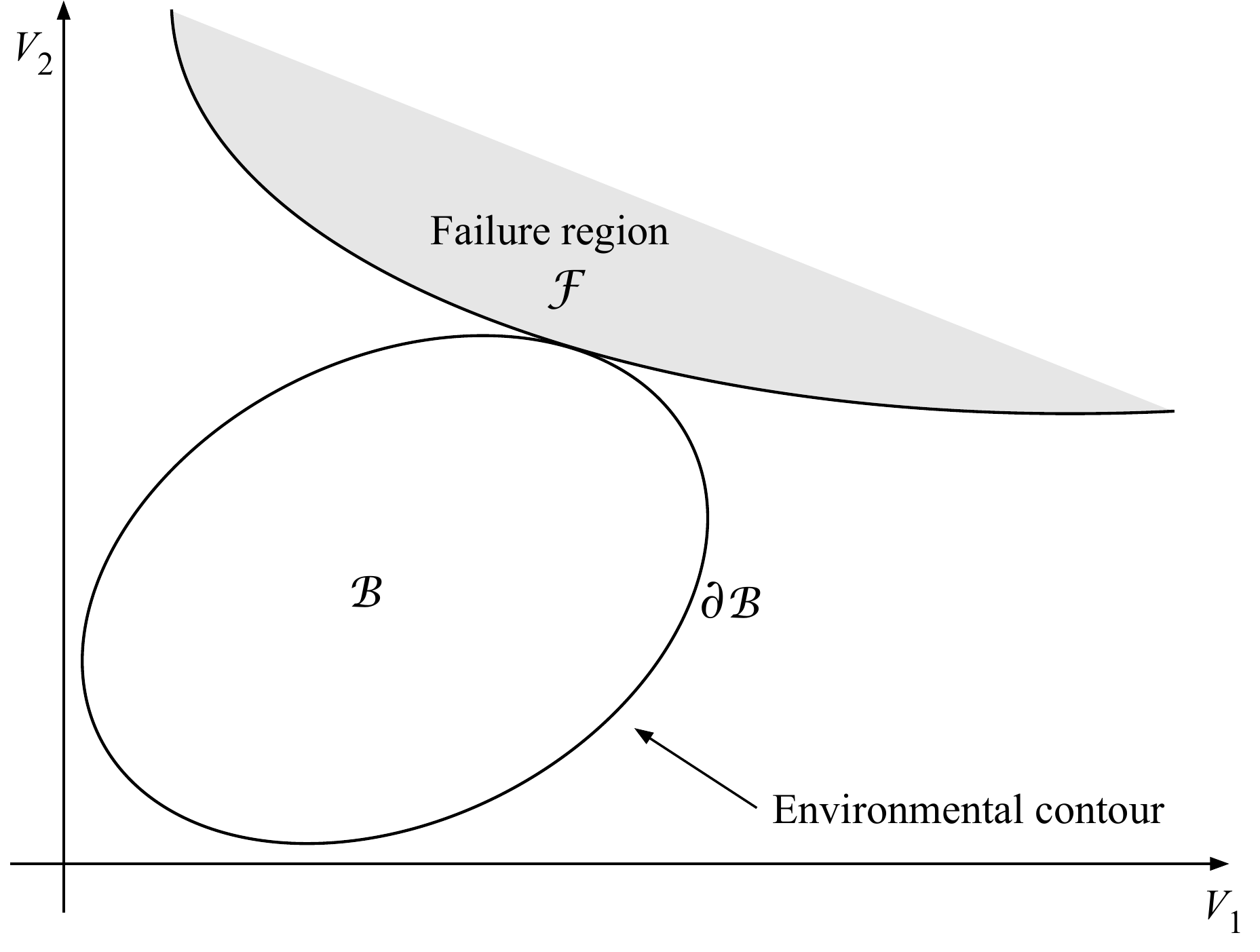}
\end{center}
\caption{An environmental contour $\partial \mc{B}$ and a failure region $\mc{F}$.}
\label{fig:env_contour}
\end{figure}

Following \cite{HusebyVE-EnvCont-ESREL2017} we define the \emph{exceedence probability} of $\mc{B}$ with respect to $\mc{E}$ as:
\begin{equation}
\label{eq:exceedenceProb}
P_{e}(\mc{B}, \mc{E}) := \sup \{p_f(\mc{F}) : \mc{F} \in \mc{E}\}.
\end{equation}
For a given \emph{target probability} $P_e$ the objective is to choose an environmental contour $\partial\mc{B}$ such that:
$$
P_{e}(\mc{B}, \mc{E}) = P_e
$$

We observe that the exceedence probability defined above represents an upper bound on the failure probability of the structure assuming that the true failure region is a member of the family $\mc{E}$. Of particular interest are cases where one can argue that the failure region of a structure is \emph{convex}. That is, cases where $\mc{E}$ is the class of all convex sets which do not intersect with the interior of $\mc{B}$. In the remaining part of the paper we will assume that $\mc{E}$ satisfies this. 

\subsection{Monte Carlo contours}
\label{sec: mc_contours}
There are many possible ways of constructing environmental contours. In this paper we focus on the Monte Carlo based approach first introduced in \cite{HusebyVN-EnvCont-OE2013}, and improved in \cite{HusebyVN-EnvCont-SS2015} and \cite{HusebyVN-EnvCont-ESREL2014}.

Let $\mc{U}$ be the set of all unit vectors in $\mb{R}^n$, and let $\bm{u} \in \mc{U}$. We then introduce a function $C(\bm{u})$ defined for all $\bm{u} \in \mc{U}$ as: 
\begin{equation}
\label{eq:Cfunction}
C(\bm{u}) := \inf\{C : P(\bm{u}'\bm{V} > C) \leq P_e\}
\end{equation}
Thus, $C(\bm{u})$ is the $(1-P_e)$-quantile of the distribution of $\bm{u}'\bm{V}$. Given the distribution of $\bm{V}$, the function $C(\bm{u})$  can easily be estimated by using Monte Carlo simulation. Thus, let $\bm{V}_1, \ldots, \bm{V}_N$ be a random sample from the distribution of $\bm{V}$. We then choose $\bm{u} \in \mc{U}$, and let $Y_r(\bm{u}) = \bm{u}'\bm{V}_r$, $r = 1, \ldots, N$. These results are sorted in ascending order:
$$
Y_{(1)} \leq Y_{(2)} \leq \cdots \leq Y_{(N)}
$$
Using the sorted numbers we first estimate $C(\bm{u})$. Since $C(\bm{u})$ is the $(1-P_e)$-quantile in the distribution, a natural estimator is:
$$
\hat{C}(\bm{u}) = Y_{(k)},
$$
where $k$ is determined so that:
$$
\frac{k}{N} \approx 1 - P_e.
$$
Note, however, that this estimator can be improved considerably by using importance sampling. See \cite{HusebyVN-EnvCont-ESREL2014} for details. 

\medskip

For each $\bm{u} \in \mc{U}$, we also introduce the halfspaces:
\begin{align*}
\Pi^-(\bm{u}) &= \{\bm{v} : \bm{u}' \bm{v} \leq C(\bm{u})\},\\
\Pi^+(\bm{u}) &= \{\bm{v} : \bm{u}' \bm{v} > C(\bm{u})\}.
\end{align*}
We then define the environmental contour as the boundary $\partial\mc{B}$ of the \emph{convex set} set $\mc{B}$ given by:
\begin{equation}
\label{eq:env_contourSet}
\mc{B} := \bigcap_{\bm{u} \in \mc{U}} \Pi^-(\bm{u})
\end{equation}
It follows that the exceedence probability of $\mc{B}$ with respect to $\mc{E}$ is given by:
\begin{align*}
P_e(\mc{B}, \mc{E}) 
&= \sup \{p_f(\mc{F}) : \mc{F} \in \mc{E}\} \\[2mm]
&= \sup \{p_f(\Pi^+(\bm{u})) : \bm{u} \in \mc{U}\} \\[2mm]
&= \sup_{\bm{u} \in \mc{U}} P(\bm{u}' \bm{V} > C(\bm{u})) = P_e,
\end{align*}
where the second equality follows since we have assumed that $\mc{F}$ is convex and hence contained in $\Pi^+(\bm{u})$ for all $\mc{F} \in \mc{E}$. In fact for all $\bm{u} \in \mc{U}$ we have $\Pi^+(\bm{u}) \in \mc{E}$ as well, and these halfspaces are the maximal sets within $\mc{E}$. Moreover, the last equation follows by the definition of $C(\bm{a})$ given in \eqref{eq:Cfunction}. Thus, we conclude that the contour $\partial\mc{B}$ indeed has the correct exceedence probability with respect to $\mc{E}$. See \cite{HusebyVE-EnvCont-ESREL2017} for further details regarding this.

\section{Buffered environmental contours}
\label{sec: buffered_environmental_contours}
In this section, we introduce a new concept called \emph{buffered environmental contours}. This combines the ideas behind buffered failure probabilities and environmental contours. Before we introduce the main results we review a result on superquantiles which will be essential in our approach (See \cite{Rockafellar}.)

\begin{prop}
\label{prop:monotonicity_of_superquantiles}
Let $g_1$ and $g_2$ be two performance functions such that $g_1(V) \leq g_2(V)$ almost surely, and let $\bar{q}_{1,\alpha}$ and $\bar{q}_{2,\alpha}$ denote the $\alpha$-superquantiles of $g_1$ and $g_2$ respectively. Then $\bar{q}_{1,\alpha} \leq \bar{q}_{2,\alpha}$.
\end{prop}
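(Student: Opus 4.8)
The plan is to avoid working directly with the conditional-expectation definition \eqref{eq: superquantile} of the superquantile, since the two conditioning events $\{g_1 > q_{1,\alpha}\}$ and $\{g_2 > q_{2,\alpha}\}$ differ and are awkward to compare head-on. Instead I would invoke the Rockafellar--Uryasev variational representation of the superquantile, available in the cited reference \cite{Rockafellar}, namely
$$
\bar{q}_{\alpha} = \inf_{C \in \mb{R}} \left\{ C + \frac{1}{1-\alpha}\, E\big[(g - C)^+\big] \right\},
$$
where $(t)^+ = \max\{t, 0\}$. This representation converts the comparison of two superquantiles into a comparison of two parametrized objective functions over a common variable $C$, which is exactly what makes the monotonicity transparent.

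The key observation is that for each fixed threshold $C$, the map $t \mapsto (t - C)^+$ is non-decreasing. Hence the almost-sure inequality $g_1 \leq g_2$ yields $(g_1 - C)^+ \leq (g_2 - C)^+$ almost surely, and monotonicity of expectation gives $E[(g_1 - C)^+] \leq E[(g_2 - C)^+]$ for every $C$. Adding $C$ and dividing by the positive constant $1 - \alpha$ preserves this, so the objective function associated with $g_1$ lies pointwise below the one associated with $g_2$.

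Finally, taking the infimum over $C$ on both sides preserves the inequality, which delivers $\bar{q}_{1,\alpha} \leq \bar{q}_{2,\alpha}$ directly.

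The only real obstacle is justifying the variational formula itself; once it is in hand the argument is immediate, and it is robust to subtleties about atoms in the distribution or whether the infimum is attained, since everything reduces to the pointwise monotonicity of $(\cdot - C)^+$. An alternative route that sidesteps this formula uses the averaging-of-quantiles representation $\bar{q}_{\alpha} = (1-\alpha)^{-1}\int_{\alpha}^{1} q_{\beta}\, d\beta$ together with the elementary fact that $g_1 \leq g_2$ almost surely implies $F_1 \geq F_2$ for the cumulative distribution functions, and hence $q_{1,\beta} \leq q_{2,\beta}$ for every $\beta$; integrating this pointwise quantile inequality over $[\alpha, 1]$ yields the same conclusion.
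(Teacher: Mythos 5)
The paper offers no proof of this proposition at all: it is stated as a known result imported from \cite{Rockafellar}, and the text moves directly on to \corref{cor:monotonicity_of_bufProb}. So there is no internal argument to compare against, and what you have written is essentially the standard proof from the cited literature. Both of your routes are correct for the superquantile as Rockafellar defines it: the variational representation $\bar{q}_{\alpha} = \inf_{C \in \mb{R}}\{C + (1-\alpha)^{-1}E[(g-C)^{+}]\}$ makes monotonicity immediate from the pointwise monotonicity of $t \mapsto (t-C)^{+}$ together with monotonicity of expectation and of infima, and the quantile-averaging formula $\bar{q}_{\alpha} = (1-\alpha)^{-1}\int_{\alpha}^{1}q_{\beta}\,d\beta$ reduces the claim to the elementary monotonicity of quantiles. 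Either one is a complete and self-contained argument that the paper itself does not supply.

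One caveat deserves attention, because it cuts against your remark that the argument ``is robust to subtleties about atoms.'' The paper defines the superquantile by \eqref{eq: superquantile}, i.e., as the tail conditional expectation $E[g \mid g > q_{\alpha}]$. This coincides with the variational and quantile-averaging definitions only when the distribution of $g(\bm{V})$ has no atom at $q_{\alpha}$, so that $P(g > q_{\alpha}) = 1-\alpha$. When atoms are allowed, the tail conditional expectation (sometimes denoted $\mathrm{CVaR}^{+}$) is a genuinely different quantity and is \emph{not} monotonic. For example, with $\alpha = 0.9$, let $X = 0$ with probability $0.95$ and $X = 10$ with probability $0.05$; then $q_{X} = 0$ and $E[X \mid X > q_{X}] = 10$. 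Now let $Y = X$ on $\{X = 10\}$ and let $Y$ be uniform on $[0,1]$ on $\{X = 0\}$, so that $Y \geq X$ almost surely; then $q_{Y} = 0.9/0.95$, $P(Y > q_{Y}) = 0.1$, and $E[Y \mid Y > q_{Y}] \approx 5.5 < 10$. Thus your proof establishes monotonicity of the Rockafellar superquantile, and to recover the proposition exactly as stated with the paper's definition \eqref{eq: superquantile} one must additionally invoke the implicit standing assumption that the distribution of the performance function is continuous at its $\alpha$-quantile --- an assumption that does hold in the paper's setting of continuous environmental variables, but which should be made explicit rather than dismissed.
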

As a corollary of this result we get the following result on buffered failure probabilities:
\begin{cor}
\label{cor:monotonicity_of_bufProb}
Let $g_1$ and $g_2$ be two performance functions such that $g_1(V) \leq g_2(V)$ almost surely, and let $\bar{p}_{1,f}$ and $\bar{p}_{2,f}$ denote the buffered failure probabilities of $g_1$ and $g_2$ respectively. Then $\bar{p}_{1,f} \leq\bar{p}_{2,f}$.
\end{cor}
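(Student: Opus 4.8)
The plan is to convert the defining property of the buffered failure probability into a comparison of two probability levels, and then read off the conclusion directly from \propref{prop:monotonicity_of_superquantiles}.

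First I would recall, from \eqref{eq: buffered_failure_prob}, that for $i = 1, 2$ we may write $\bar{p}_{i,f} = 1 - \alpha_i$, where $\alpha_i$ is the level at which the superquantile vanishes, that is, $\bar{q}_{i,\alpha_i} = 0$. Since $\bar{p}_{1,f} \leq \bar{p}_{2,f}$ is equivalent to $\alpha_2 \leq \alpha_1$, the whole problem reduces to comparing these two levels.

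Next I would use the fact that, for a fixed performance function, the superquantile $\bar{q}_{i,\alpha}$ is non-decreasing in $\alpha$: raising $\alpha$ only pushes the conditioning event further into the upper tail, so the conditional expectation in \eqref{eq: superquantile} cannot decrease. Consequently the set $\{\alpha : \bar{q}_{i,\alpha} \leq 0\}$ is downward closed with right endpoint $\alpha_i$, so that $\alpha_i = \sup\{\alpha : \bar{q}_{i,\alpha} \leq 0\}$. Now \propref{prop:monotonicity_of_superquantiles} gives $\bar{q}_{1,\alpha} \leq \bar{q}_{2,\alpha}$ for every $\alpha$, which yields the inclusion $\{\alpha : \bar{q}_{2,\alpha} \leq 0\} \subseteq \{\alpha : \bar{q}_{1,\alpha} \leq 0\}$; indeed, $\bar{q}_{2,\alpha} \leq 0$ forces $\bar{q}_{1,\alpha} \leq \bar{q}_{2,\alpha} \leq 0$. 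Taking suprema then gives $\alpha_2 \leq \alpha_1$, which is exactly the desired inequality.

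The main obstacle, and the only place where care is needed, is the passage between inequalities among superquantile \emph{values} and inequalities among the \emph{levels} $\alpha$. This rests on the monotonicity of $\alpha \mapsto \bar{q}_{i,\alpha}$ together with the existence and essential uniqueness of the crossing $\bar{q}_{i,\alpha_i} = 0$; under the continuity assumptions implicit throughout the paper, so that each $\bar{q}_{i,\alpha}$ is continuous and strictly increasing where relevant, the level $\alpha_i$ is well defined and the supremum argument above is rigorous. Everything else is a direct application of \propref{prop:monotonicity_of_superquantiles}.
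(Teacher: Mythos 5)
Your proof is correct and takes essentially the route the paper intends: the paper states this corollary without a written proof, presenting it as an immediate consequence of \propref{prop:monotonicity_of_superquantiles}, and your argument---reducing $\bar{p}_{1,f} \leq \bar{p}_{2,f}$ to $\alpha_2 \leq \alpha_1$, then comparing the zero-crossing levels via monotonicity of $\alpha \mapsto \bar{q}_{i,\alpha}$ and the set inclusion $\{\alpha : \bar{q}_{2,\alpha} \leq 0\} \subseteq \{\alpha : \bar{q}_{1,\\alpha} \leq 0\}$---is exactly the filling-in of that implication. The care you take about existence and essential uniqueness of the crossing level is appropriate, since the paper itself implicitly assumes this in defining $\bar{p}_f$ through the equation $\bar{q}_{\alpha} = 0$.
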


For a given performance function $g$ its failure probability, $p_f$, can be computed based on the failure region of $g$ alone. In contrast, computing the buffered failure probability, $\bar{p}_f$, requires more detailed information about the distribution of $g$. We indicate this by expressing $\bar{p}_f$ as a function of $g$ and denoted $\bar{p}_f(g)$.

\medskip

Just as for classical environmental contours, a \emph{buffered environmental contour} is the boundary $\partial \bar{\mc{B}}$ of some suitable set $\bar{\mc{B}} \subseteq \mb{R}^n$. We shall now describe how the set $\bar{\mc{B}}$ can be constructed. As in the previous section we let $\mc{U}$ be the set of all unit vectors in $\mb{R}^n$, and let $\bm{u} \in \mc{U}$. Moreover, we let $P_e$ be a given target probability, and let $C(\bm{u})$ be defined by \eqref{eq:Cfunction}. In order to introduce buffering, we let:
\begin{equation}
\label{eq:bufCfunction}
\bar{C}(\bm{u}) := \E[\bm{u}' \bm{V} | \bm{u}' \bm{V} > C(\bm{u})].
\end{equation}
Given the distribution of $\bm{V}$, the function $\bar{C}(\bm{u})$ can easily be estimated by using Monte Carlo simulation. As in \subsecref{sec: mc_contours}, we let $\bm{V}_1, \ldots, \bm{V}_N$ be a random sample from the distribution of $\bm{V}$, and choose $\bm{u} \in \mc{U}$. Based on the sorted values $Y_{(1)} \leq Y_{(2)} \leq \cdots \leq Y_{(N)}$ we first estimate $C(\bm{u})$ by $Y_{(k)}$ as previously explained. We then estimate $\bar{C}(\bm{u})$ by computing the average value of the sampled values which are greater than $Y_{(k)}$. Thus, we estimate $\bar{C}(\bm{u})$ by:
$$
\hat{\bar{C}}(\bm{u}) = \frac{1}{N - k} \sum_{r > k} Y_{(r)}.
$$

\medskip

For each $\bm{u} \in \mc{U}$, we also introduce the halfspaces:
\begin{align*}
\bar{\Pi}^-(\bm{u}) &= \{\bm{v} : \bm{u}' \bm{v} \leq \bar{C}(\bm{u})\},\\
\bar{\Pi}^+(\bm{u}) &= \{\bm{v} : \bm{u}' \bm{v} > \bar{C}(\bm{u})\},
\end{align*}
similar to what we did in the previous section. Finally, we define the buffered environmental contour as the boundary $\partial \bar{\mc{B}}$ of the \emph{convex set} set $\bar{\mc{B}}$ given by:
\begin{equation}
\label{eq:buf_env_contourSet}
\bar{\mc{B}} := \bigcap_{\bm{u} \in \mc{U}} \bar{\Pi}^-(\bm{u})
\end{equation}

We observe that by \eqref{eq:bufCfunction} we obviously have that $\bar{C}(\bm{u}) > C(\bm{u})$. By comparing \eqref{eq:env_contourSet} and \eqref{eq:buf_env_contourSet}, it is easy to see that this implies that:
$$
\mc{B} \subset \bar{\mc{B}}.
$$
Thus, given that the same target probability $P_e$ is used to construct both contours, the buffered environmental contour is more conservative than the classical environmental contour.

\medskip

The next step is to identify a family $\mc{G}$ of performance functions defined relative to the set $\mc{B}$ such that $\bar{p}_f(g) \leq P_e$ for all $g \in \mc{G}$. We recall that for the classical environmental contour we chose to let $\mc{E}$ be the family of all convex failure regions which do not intersect with the interior of $\mc{B}$. Thus, one might think that the natural counterpart for buffered environmental contours would be to let $\mc{G}$ be the family of performance functions with convex failure regions which do not intersect with the interior of $\bar{\mc{B}}$. In this case, however, we need more control over the distributions of the performance functions. In order to do so we choose $\bm{u} \in \mc{U}$ and introduce the performance function $\Gamma(\bm{u}, \cdot)$ given by:
$$
\Gamma(\bm{u}, \bm{V}) = \bm{u}' \bm{V} - \bar{C}(\bm{u})
$$
By \eqref{eq:bufCfunction} we have:
\begin{align*}
\E[\Gamma(\bm{u}, \bm{V}) &| \Gamma(\bm{u}, \bm{V}) > C(\bm{u}) - \bar{C}(\bm{u})] \\
&= \E[\bm{u}' \bm{V} | \bm{u}' \bm{V} > C(\bm{u})] - \bar{C}(\bm{u}) = 0.
\end{align*}
Moreover, by \eqref{eq:Cfunction} we have:
\begin{align*}
\bar{p}_f(\Gamma(\bm{u}, \cdot)) &= P(\Gamma(\bm{u}, \bm{V}) > C(\bm{u}) - \bar{C}(\bm{u})) \\
&= P(\bm{u}' \bm{V} > C(\bm{u})) = P_e
\end{align*}
Since the unit vector $\bm{u}$ was arbitrarily chosen, we conclude that the performance function $\Gamma(\bm{u}, \cdot)$ has the desired buffered failure probability $P_e$ for all $\bm{u} \in \mc{U}$. 

We will use these performance functions as a basis for constructing the family $\mc{G}$ where the $\Gamma(\bm{u}, \cdot)$-functions serve as \emph{maximal} elements in this family. Note that the $\Gamma(\bm{u}, \cdot)$-functions now play a similar role as the halfspaces $\Pi^+(\bm{u})$ played in the construction of the family $\mc{F}$. Thus, we let $\mc{G}$ be the family of all performance functions $g$ for which there exists a $\bm{u} \in \mc{U}$ such that $g(\bm{v}) \leq \Gamma(\bm{u}, \bm{v})$ for all $\bm{v} \in \mc{V}$. By the above discussion the following result is immediate:
\begin{thm}
\label{thm:targetBufProbFamily}
For all $g \in \mc{G}$ we have $\bar{p}_f(g) \leq P_e$.
\end{thm}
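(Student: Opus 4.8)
The plan is to reduce the statement directly to the monotonicity result established just above it, so that almost no new work is required. First I would fix an arbitrary $g \in \mc{G}$ and unpack the definition of the family: by construction there exists a unit vector $\bm{u} \in \mc{U}$ such that $g(\bm{v}) \leq \Gamma(\bm{u}, \bm{v})$ for every $\bm{v} \in \mc{V}$. This is the only structural information about $g$ that we have, and it is exactly what ties $g$ to one of the maximal elements $\Gamma(\bm{u}, \cdot)$.

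The second step converts this deterministic, pointwise inequality over $\mc{V}$ into the almost-sure comparison demanded by \corref{cor:monotonicity_of_bufProb}. Since the random vector $\bm{V}$ takes values in $\mc{V}$ with probability one, I would substitute $\bm{v} = \bm{V}$ into the pointwise bound to obtain $g(\bm{V}) \leq \Gamma(\bm{u}, \bm{V})$ almost surely. With $g_1 = g$ and $g_2 = \Gamma(\bm{u}, \cdot)$ this is precisely the hypothesis of the corollary, which then yields $\bar{p}_f(g) \leq \bar{p}_f(\Gamma(\bm{u}, \cdot))$.

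Finally I would recall the computation carried out immediately before the theorem, where it was shown that $\bar{p}_f(\Gamma(\bm{u}, \cdot)) = P_e$ for every $\bm{u} \in \mc{U}$. Chaining this equality with the inequality from the previous step gives $\bar{p}_f(g) \leq P_e$, and since $g \in \mc{G}$ was arbitrary, the claim follows for all members of the family.

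There is no serious obstacle here; the argument is essentially immediate once the earlier corollary and the value $\bar{p}_f(\Gamma(\bm{u}, \cdot)) = P_e$ are in hand. The only point worth stating carefully is the upgrade from the pointwise dominance $g(\bm{v}) \leq \Gamma(\bm{u}, \bm{v})$ on $\mc{V}$ to the almost-sure dominance $g(\bm{V}) \leq \Gamma(\bm{u}, \bm{V})$, which relies on $\bm{V} \in \mc{V}$ holding with probability one; this is what makes \corref{cor:monotonicity_of_bufProb} applicable.
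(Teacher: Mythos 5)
Your proposal is correct and follows essentially the same route as the paper: extract the dominating $\Gamma(\bm{u},\cdot)$ from the definition of $\mc{G}$, invoke \corref{cor:monotonicity_of_bufProb}, and conclude via the identity $\bar{p}_f(\Gamma(\bm{u},\cdot)) = P_e$ established just before the theorem. The only difference is that you spell out the (harmless) upgrade from pointwise dominance on $\mc{V}$ to almost-sure dominance, a step the paper's proof takes for granted.
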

\begin{proof}
Assume that $g \in \mc{G}$. Then there exists a $\bm{u} \in \mc{U}$ such that $g(\bm{V}) \leq \Gamma(\bm{u}, \bm{V})$ almost surely. Hence, by \corref{cor:monotonicity_of_bufProb} and the above calculations we have:
$$
\bar{p}_f(g) \leq \bar{p}_f(\Gamma(\bm{u}, \cdot)) = P_e.
$$
\end{proof}

Having constructed both the set $\bar{\mc{B}}$ and the family $\mc{G}$ we are now ready to introduce the \emph{buffered exceedence probability} of $\bar{\mc{B}}$ with respect to $\mc{G}$ defined as:
\begin{equation}
\label{eq:bufExceedenceProb}
\bar{P}_{e}(\bar{\mc{B}}, \mc{G}) := \sup \{\bar{p}_f(g) : g \in \mc{G}\}.
\end{equation}

We note that by the definition of $\mc{G}$ it follows that $\Gamma(\bm{u}, \cdot) \in \mc{G}$ for all $\bm{u} \in \mc{U}$. Hence, we get:
\begin{align*}
\bar{P}_{e}(\bar{\mc{B}}, \mc{G}) 
&= \sup \{\bar{p}_f(g) : g \in \mc{G}\} \\[2mm]
&= \sup \{\bar{p}_f(\Gamma(\bm{u}, \cdot)) : \bm{u} \in \mc{U}\} = P_e,
\end{align*}
Thus, we conclude that the contour $\partial\bar{\mc{B}}$ indeed has the correct buffered exceedence probability with respect to $\mc{G}$.

\bigskip

If $g \in \mc{G}$ and $g(\bm{v}) \leq \Gamma(\bm{u}, \bm{v})$ for all $\bm{v} \in \mc{V}$, we have:
\begin{align*}
\mc{F}(g) &\subseteq \mc{F}(\Gamma(\bm{u}, \cdot)) \\
&= \{\bm{v} : \bm{u}' \bm{v} - \bar{C}(\bm{u}) > 0 \} \\
&= \{\bm{v} : \bm{u}' \bm{v} > \bar{C}(\bm{u}) \} = \bar{\Pi}^+(\bm{u})
\end{align*}
Thus, the failure region of a performance function $g \in \mc{G}$ does not overlap with the interior of the set $\bar{\mc{B}}$, but is contained within a halfspace supporting $\bar{\mc{B}}$. This is similar to the relation between failure regions in the family $\mc{E}$ and the set $\mc{B}$ for the classical environmental contours. However, as already pointed out, knowledge about the failure region of a performance function is not sufficient to ensure that the performance function has the correct buffered failure probability.

\bigskip

It may be argued that the choice of the $\Gamma(\bm{u}, \cdot)$-functions as maximal elements in the family $\mc{G}$ is too restrictive. In order to have a more flexible framework, it is possible to consider a slightly more general approach where we define:
\begin{equation}
\label{eq:bufCfunction}
\bar{C}_a(\bm{u}) := \E[a \bm{u}' \bm{V} | \bm{u}' \bm{V} > C(\bm{u})] = a \bar{C}(\bm{u}),
\end{equation}
where $a$ is a positive constant. By increasing the $a$-factor, the contour may be inflated so that it can be used for steeper performance factors. 

On the other hand it should be noted that to ensure that a given performance function $g$ has the correct buffered failure probability, it is not necessary that $g(\bm{v})$ is dominated by some $\Gamma(\bm{u}, \cdot)$-function for \emph{all} $\bm{v} \in \mc{V}$. It is sufficient that this holds for $\bm{v}$-values corresponding to the upper tail area of $g$.

\section{Numerical example}
\label{sec: numerical_example}
In this subsection we illustrate the proposed method by considering a numerical example introduced in \cite{VanemB-G:JOMAE2015}. More specifically, we consider joint long-term models for \emph{significant wave height}, denoted by $H$, and \emph{wave period} denoted by $T$. A marginal distribution is fitted to the data for significant wave height and a conditional model, conditioned on the value of significant wave height, is subsequently fitted to the wave period. The joint model is the product of these distribution functions:
$$
f_{T,H}(t, h) = f_H(h) f_{T|H}(t|h)
$$
Simultaneous distributions have been fitted to data assuming a three-parameter Weibull distribution for the significant wave height, $H$, and a lognormal conditional distribution for the wave period, $T$. The three-parameter Weibull distribution is parameterized by a location parameter, $\gamma$, a scale parameter $\alpha$, and a shape parameter $\beta$ as follows:
$$
f_H(h) = \frac{\beta}{\alpha} \left(\frac{h-\gamma}{\alpha}\right)^{\beta-1} e^{-[(h-\gamma)/\alpha]^{\beta}}, \quad h \geq \gamma.
$$
The lognormal distribution has two parameters, the log-mean $\mu$ and the log-standard deviation $\sigma$ and is expressed as:
$$
f_{T|H}(t|h) = \frac{1}{t \sqrt{2\pi}} e^{-[(\ln(t) - \mu)^2 / (2 \sigma^2)]}, \quad t \geq 0,
$$
where the dependence between $H$ and $T$ is modelled by letting the parameters $\mu$ and $\sigma$ be expressed in terms of $H$ as follows:
$$
\mu = E[\ln(T) | H = h] = a_1 + a_2 h^{a_3},
$$
$$
\sigma = SD[\ln(T) | H = h] = b_1 + b_2 e^{b_3 h}.
$$
The parameters $a_1, a_2, a_3, b_1, b_2, b_3$ are estimated using available data from the relevant geographical location. In the example considered here the parameters are fitted based on a data set from North West Australia. We consider data for two different cases: \emph{swell} and \emph{wind sea}. The parameters for the three-parameter Weibull distribution are listed in \tblref{tbl:weibull_parameters}, while the parameters for the conditional log-normal distribution are listed in \tblref{tbl:lognormal_parameters}. In all the examples we use a return period of 25 years. The models are fitted using sea states representing periods of 1 hour. Thus, we get 24 data points per 24 hours. Thus, the desired exceedence probability is given by:
$$
P_e = \frac{1}{25 \cdot 365.25 \cdot 24} = 4.5631 \cdot 10^{-6}.
$$
For more details about these examples we refer to \cite{VanemB-G:JOMAE2015}.

\medskip

\begin{table}[h!]
\begin{center}
\caption{Fitted parameter for the three-parameter Weibull distribution for signifcant wave heights}
\medskip
\small{
\begin{tabular}{lccc}
\hline
 & $\alpha$ & $\beta$ & $\gamma$ \\
\hline
Swell & 0.450 & 1.580 & 0.132 \\
Wind sea & 0.605 & 0.867 & 0.322 \\
\hline
\end{tabular}
}
\label{tbl:weibull_parameters}
\end{center}
\end{table}

\begin{table}[h!]
\begin{center}
\caption{Fitted parameter for the conditional log-normal distribution for wave periods}
\medskip
\small{
\begin{tabular}{llccc}
\hline
 & & $i = 1$ & $i = 2$ & $i = 3$ \\
\hline
Swell     & $a_i$ & 0.010 & 2.543 & 0.032 \\
		  & $b_i$ & 0.137 & 0.000 & 0.000 \\
Wind sea  & $a_i$ & 0.000 & 1.798 & 0.134 \\
		  & $b_i$ & 0.042 & 0.224 & -0.500 \\
\hline
\end{tabular}
}
\label{tbl:lognormal_parameters}
\end{center}
\end{table}

The classical environmental contours are estimated based on the methods presented in \cite{HusebyVN-EnvCont-OE2013}. More specifically, we have used Method 2 presented in this paper. The buffered environmental contours are estimated in exactly the same way, except that $\hat{C}(\bm{u})$ is replaced by $\hat{\bar{C}}(\bm{u})$ for all $\bm{u} \in \mc{U}$.

\medskip

In \figref{fig: bufContour_nwa_sw_25} and \figref{fig: bufContour_nwa_ws_25} the resulting environment contours are shown. As one expected, the classical environmental contours are located inside their respective buffered contours. Thus, since the target probability $P_e$ is the same for both types of contours, the buffered contours are more conservative than the classical contours.

\begin{figure}[h!]
\begin{center}
\includegraphics[width=0.5\textwidth]{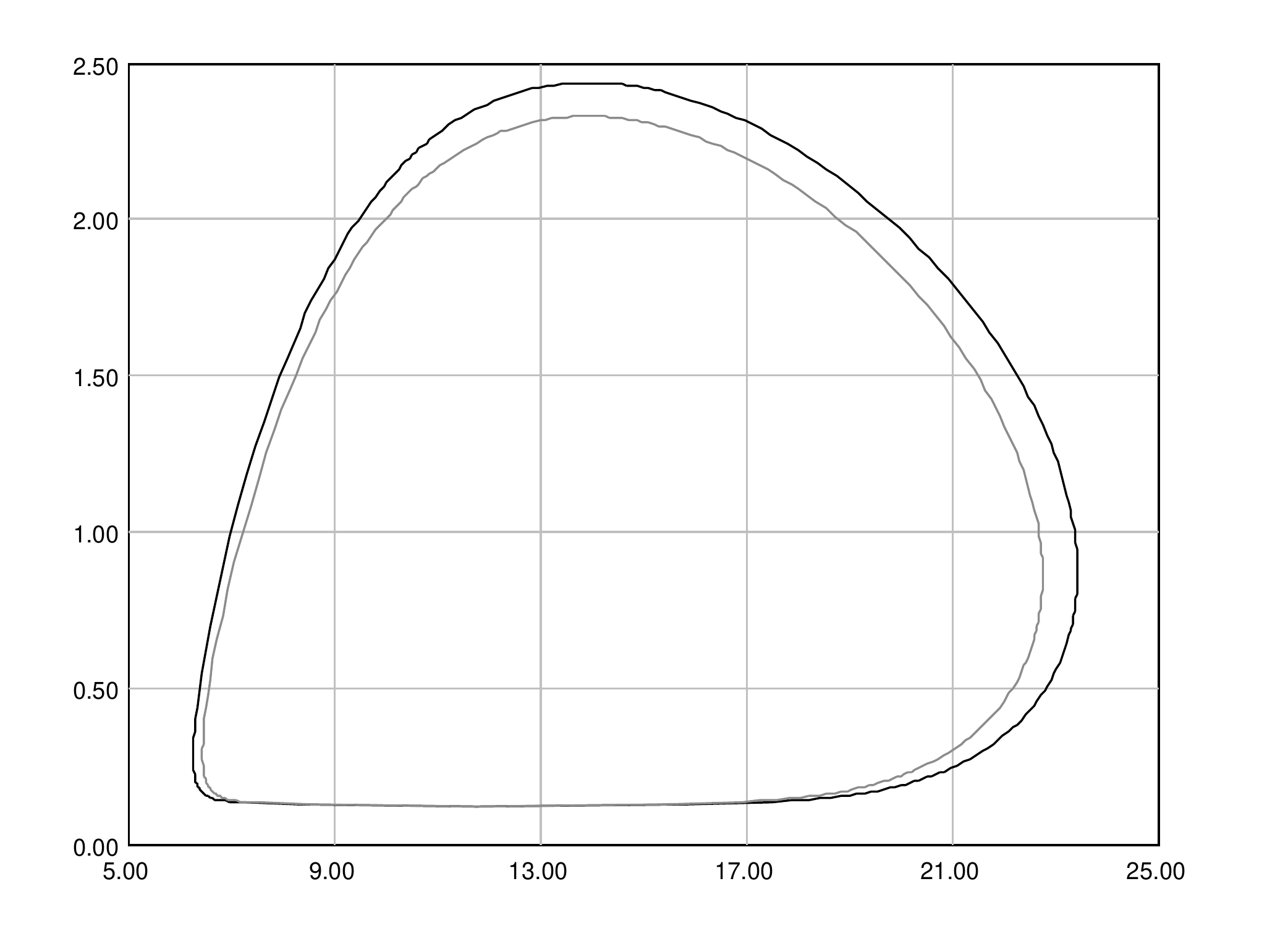}
\end{center}
\caption{Buffered environmental contour (black) and classical environmental contour (gray) for North West Australia Swell with return period 25 years.}
\label{fig: bufContour_nwa_sw_25}
\end{figure}
 
\begin{figure}[h!]
\begin{center}
\includegraphics[width=0.5\textwidth]{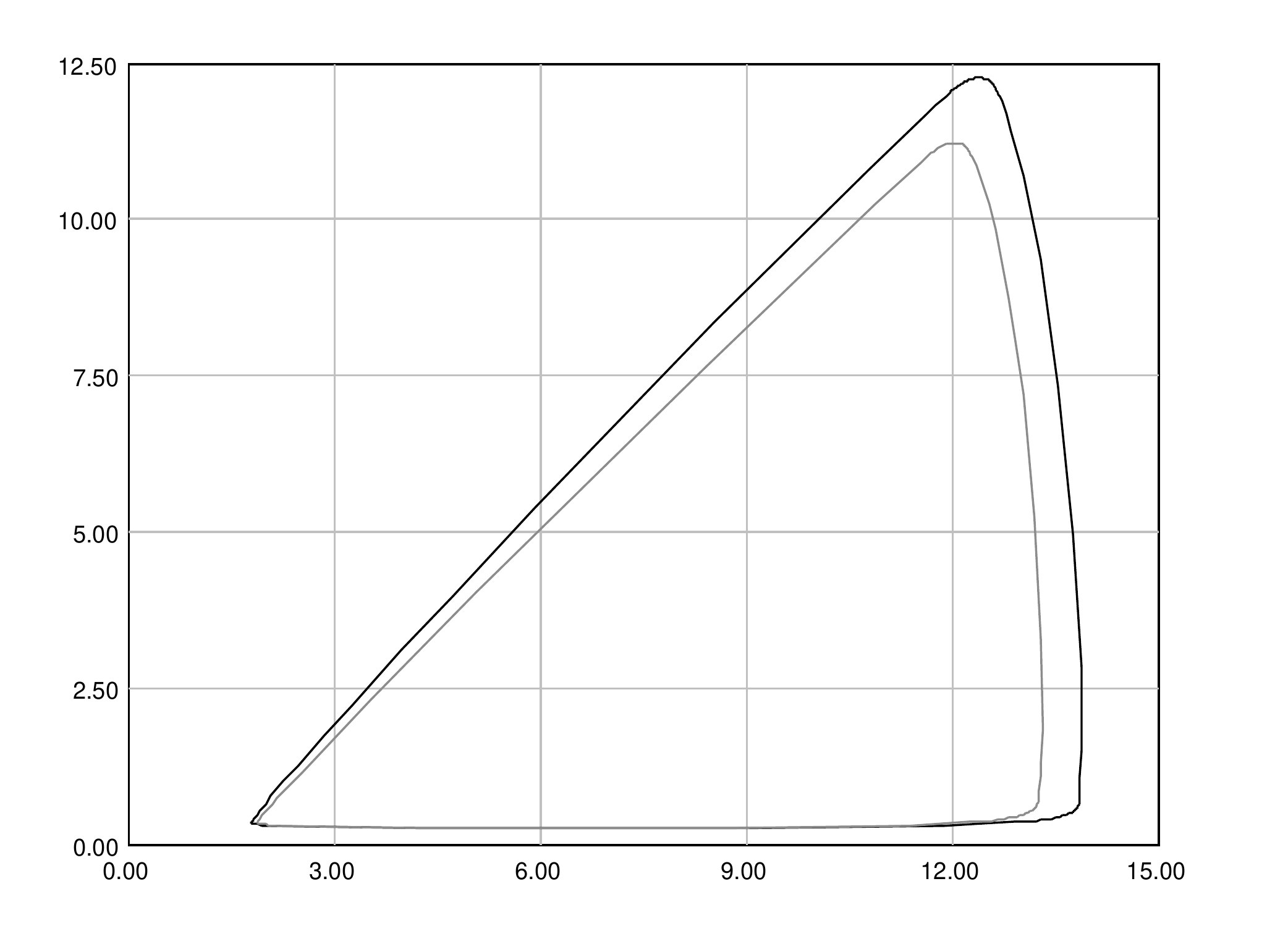}
\end{center}
\caption{Buffered environmental contour (black) and classical environmental contour (gray) for North West Australia Wind sea with return period 25 years.}
\label{fig: bufContour_nwa_ws_25}
\end{figure}

\section{Conclusions and future work}
\label{sec: conclusion}

In the present paper we have introduced the concept of buffered environmental contours, and shown how such contours can be estimated using Monte Carlo simulations. Such contours do not just take into account the probability of failure, but also the consequences of a failure. This is relevant e.g., when analysing the risk of flooding at a given location. While it may not be possible to prevent floodings from occurring, the damage caused by such an event can vary a lot depending on how much the water has risen above the normal level. In some cases only minor damages may be the result. In other cases the consequences can be catastrophic.

For a given target probability, $P_e$ buffered environmental contours are generally more conservative than the classical environmental contours. However,  in cases where the consequences are more important than the triggering event itself, a higher target probability might be acceptable as long as the damages are manageable. Thus, in real-life applications a buffered environmental contour may not be so conservative after all. At the same time these contours provide much more information about the tail area of the environmental variables. This may be very useful when a design is optimized.

The buffered environmental contours proposed in this paper are the natural extension of the Monte Carlo contours introduced in \cite{HusebyVN-EnvCont-OE2013}. In particular both contour types are boundaries of convex sets. Sometimes this restriction may lead to contours which include areas of very low probability. Thus, it would be of interest to investigate other ways of constructing buffered contours. In particular, it is possible to modify contours obtained by using the Rosenblatt transformation so that they include buffering. To make this work, however, evaluating the resulting contours becomes very important. The evaluation framework described in \cite{HusebyVE-EnvCont-ESREL2017} may serve as a starting point.

Future work in this area also includes the use of buffered environmental contours in design optimization, but with additional design constraints. The question is how such additional constraints can be dealt with. An initial idea is to apply a Lagrange duality method in order to transform the problem into a previously known form.

It would also be interesting to compare buffered environmental contours to the conservative environmental contours defined by \cite{Leira}. The contours defined in \cite{Leira} are typically larger sets than the environmental contours considered in \secref{sec: environmental_contours}, which means that they are more conservative when it comes to classifying structures as safe.

Another idea which requires further investigation is how time can be introduced into this model in a less restrictive way. As mentioned in \subsecref{sec: returnPeriods}, we consider average stochastic environmental conditions $\bm{V}_1, \bm{V}_2, \ldots$ over some specified time intervals and assume independence and identical distributions of the $\bm{V}_i's$. A more realistic approach would be to introduce a stochastic process in continuous time modelling the environmental situation. It is interesting to see how this affects the model and what consequences this has for the design optimization.

\section*{Acknowlegdements}
\label{sec:acknowledgements}
\small{
This paper has been written with support from the Research Council of Norway (RCN) through the project \emph{ECSADES} Environmental Contours for Safe Design of Ships and other marine structures.
}


\end{document}